\newcommand{\figref}[1]{Figure \ref{#1}}
\newcommand{\slow}[1]{\ifthenelse{\boolean{showcomments}}
{ \textcolor{red}{(SL: #1)}}{}}
\newcommand{\yx}[1]{\ifthenelse{\boolean{showcomments}}
{ \textcolor{blue}{(YX: #1)}}{}}
\newcommand{\wc}[1]{\ifthenelse{\boolean{showcomments}}
{ \textcolor{brown}{(WC: #1)}}{}}
\newcommand{\adam}[1]{\ifthenelse{\boolean{showcomments}}
{ \textcolor{teal}{(AW: #1)}}{}}
\begin{document}

\title{Enhancing Data Center Low-Voltage Ride-Through
}


\author{Yiheng Xie}
\authornote{Both authors contributed equally to this research.}
\email{yxie5@caltech.edu }
\affiliation{%
  \institution{California Institute of Technology}
  \city{Pasadena}
  \state{California}
  \country{USA}
}
\author{Wenqi Cui}
\authornotemark[1]
\email{wenqicui@nyu.edu}
\affiliation{%
  \institution{New York University}
  \city{Brooklyn}
  \state{New York}
  \country{USA}
}

\author{Adam Wierman}
\email{adamw@caltech.edu }
\affiliation{%
  \institution{California Institute of Technology}
  \city{Pasadena}
  \state{California}
  \country{USA}
}



\begin{abstract}
Data center loads have expanded significantly in recent years. Compared to traditional loads, data centers are highly sensitive to voltage deviations and thus their protection mechanisms trip more proactively during voltage fluctuations. During a grid fault, simultaneous tripping of large-scale data centers can further destabilize the transmission system and even lead to cascading failures. In response, transmission system operators are imposing voltage ride-through (VRT) requirements for data centers. 
%
In this work, we enhance the VRT capability of data centers by designing voltage controllers for their internal power distribution network. 
We first systematically analyze VRT standards and the controllable resources related to data centers. These resources enable the design of voltage control strategies to regulate voltages internal to the data center, thereby allowing loads to remain online during voltage disturbances from the external transmission grid.
We study and contrast both centralized and decentralized controllers that unify the control of heterogeneous flexible resources. 
Additionally, we construct an integrated test system that simulates both the transient fault response of the transmission system and the data center distribution network. Case studies demonstrate that the proposed voltage control mechanisms provide effective yet simple solutions to enhance data center low-voltage ride-through capability.

\end{abstract}

\begin{CCSXML}
<ccs2012>
 <concept>
  <concept_id>00000000.0000000.0000000</concept_id>
  <concept_desc>Do Not Use This Code, Generate the Correct Terms for Your Paper</concept_desc>
  <concept_significance>500</concept_significance>
 </concept>
 <concept>
  <concept_id>00000000.00000000.00000000</concept_id>
  <concept_desc>Do Not Use This Code, Generate the Correct Terms for Your Paper</concept_desc>
  <concept_significance>300</concept_significance>
 </concept>
 <concept>
  <concept_id>00000000.00000000.00000000</concept_id>
  <concept_desc>Do Not Use This Code, Generate the Correct Terms for Your Paper</concept_desc>
  <concept_significance>100</concept_significance>
 </concept>
 <concept>
  <concept_id>00000000.00000000.00000000</concept_id>
  <concept_desc>Do Not Use This Code, Generate the Correct Terms for Your Paper</concept_desc>
  <concept_significance>100</concept_significance>
 </concept>
</ccs2012>
\end{CCSXML}

\ccsdesc[500]{Hardware~Power networks}
\ccsdesc[500]{Hardware~Enterprise level and data centers power issues}
\ccsdesc[500]{Hardware~Power estimation and optimization}
\ccsdesc[300]{Hardware~Smart grid}
\ccsdesc[300]{Computing methodologies~Modeling and simulation}
\keywords{Data centers, low-voltage ride-through, voltage control}


\maketitle

\section{Introduction}
Data center electricity demand is increasing at an unprecedented pace worldwide \cite{shehabi_united_2024, goldman_sachs_ai_2025}. Unlike traditional loads, the operations of data centers are more sensitive to fluctuations in voltage and frequency \cite{babu_chalamala_data_2025}. 
As a result, during a grid event, data center uninterruptible power supply (UPS) systems are proactive in disconnecting from the grid and switching to local backup energy supplies. Therefore, from the grid perspective, data centers have a lower fault ride-through tolerance \cite{babu_chalamala_data_2025}.

Traditionally, voltage ride-through (VRT) grid codes are typically required for generation assets, while loads are treated as uncontrollable (or passive)~\cite{zeb_faults_2022, moheb_comprehensive_2022, howlader_comprehensive_2016}. As a result, data centers have not faced incentives or regulatory obligations to enhance their VRT capability. 
However, as hundred-megawatt-scale data centers continue to be interconnected, the fault ride-through capabilities of these large loads raise serious concerns about the overall power system stability\cite{nerc_nerc_2025,patrick_gravois_ercot_2025}. Simultaneous disconnection of large loads during a grid fault event can lead to cascading failure.

For example, on July 10, 2024, a permanent fault on a 340kV transmission line in the Eastern Interconnection of the United States caused a series of 6 voltage violation events over 82 seconds, with each event's duration ranging from 42 to 66 milliseconds and magnitude ranging from 0.24 to 0.4 per unit. As a result, approximately 1.5 Gigawatt of voltage-sensitive load was lost due to demand-side protection schemes. 
Subsequent analysis revealed that the entire 1.5 Gigawatt load was associated with data centers.
The system voltage rose to 1.07 per unit, and emergency mitigation actions were conducted. A comprehensive incident report can be found in \cite{nerc_nerc_2025}.

In the near future, existing and new data centers will face more stringent voltage ride-through requirements. In fact, transmission system operators in France, Ireland, Denmark, and Texas are already proposing VRT grid codes for large loads \cite{european_union_commission_2016, rte_article_2024, eirgrid_voltage_2025, myen_technical_2024, patrick_gravois_ercot_2025}. Broadly speaking, on the generation side, tighter VRT ride-through requirements for inverter-based power electronics are already being imposed by system operators \cite{nerc_2024_2025}. Considering that power electronic devices account for the majority of data center loads, similar tightening of VRT requirements for data centers is anticipated. Consequently, enhancing VRT capability in both new and existing data centers represents a critical consideration for advancing the integration of data centers into the power grid.

In the broader picture, to make data centers more grid-friendly, a variety of approaches have been developed that enable them to provide services to the power grid, including demand response and frequency regulation \cite{wierman_opportunities_2014, ghatikar_demand_2012, vaidhynathan_vulcan_2025, filippi_how_2023, fu_assessments_2020, wang_frequency_2019}. However, the voltage ride-through problem is fundamentally different: it requires a fast response to voltage disturbances occurring over a timescale of just a few milliseconds to seconds. Consequently, new control methods and solutions are needed. Despite the rapid growth of computing-intensive data center loads, there has been little systematic analysis of the quantitative requirements and controllable resources in the context of data center low-voltage ride-through. Aside from the novelty of the problem, another part of the reason is that, while modern data centers can be large enough to significantly influence power system operations, there is still a lack of open-source test systems to evaluate their interactions.

\subsection{Contributions}

In this work, we leverage the highly structured internal distribution networks within data centers to enhance VRT capabilities. By properly managing controllable devices inside the facility, internal voltages can be maintained within safe operating limits despite voltage fluctuations at the point of interconnection to the utility grid. This is achieved through the design of voltage control laws that regulate the power output of these devices in response to voltage deviations and time-varying power consumption.  Thus, data centers can remain connected to the grid without tripping because of the drop in voltage outside their tolerance limits.

Specifically, we conduct the first systematic study of low-voltage ride-through standards related to data centers, along with an analysis of the controllable resources within these facilities. Our study considers the diversity in data center infrastructure and communication capabilities, and we compare and contrast different control solutions for various settings. 
In particular, we develop a centralized voltage controller and demonstrate its dependence on system-level communication and delay. When communication quality or capabilities are limited, decentralized control laws are necessary, and we adopt controllers that adjust local actions based solely on local voltage deviations.
Through the networked approach, we unify the control of heterogeneous flexible resources within a data center, including cooling systems, computing loads, UPS units, and utility-scale batteries. 

To evaluate the proposed controllers, we build an open-source test system that simulates both the transient fault response of the transmission system and the data center distribution network. Case studies demonstrate that the proposed voltage control mechanisms provide simple yet effective solutions to enhance low-voltage ride-through in data centers.



In summary, the main contributions of this paper are:
\begin{itemize}
\item We identify and formalize the emerging challenges and opportunities arising from low-voltage ride-through capabilities of data centers. We summarize the gaps in low-voltage ride-through standards and conduct a systematic analysis that incorporates the unique features of a data center.
\item We construct and contrast centralized and decentralized controllers that achieve low-voltage ride-through for data centers. This provides practical guidance for the analysis and enhancement of low-voltage ride-through capabilities in a variety of data center architectures.

\item We develop an integrated test system that simulates both the transient response of the transmission system to faults
and the data center distribution network under voltage control. The open-source implementation provides a useful tool for studying power system dynamics with data center impacts for the research community.
\end{itemize}

\subsection{Related Work}

Our work is closely related to topics on grid-friendly data centers, voltage ride-through, and power system voltage control. Specifically, VRT is one form of grid service that data centers can provide to the grid. While VRT is well studied for individual devices (typically generators), the topic is not well explored in data centers, which are large loads with networked structures. Moreover, the VRT capability of a network of heterogeneous devices provides more design freedom and controllability. From this perspective, the enhancement of VRT capability can leverage a rich body of work on power system voltage control, which we briefly survey in this section. 

\subsubsection{Grid-friendly Data Centers}
In recent years, the number and scale of data centers have dramatically increased \cite{shehabi_united_2024}. This is largely due to the rapid growth in machine learning model sizes such as large language models. Data centers in the tens or hundreds of megawatt scale are becoming common. The sheer scale of data center loads in many regions around the world means that the existing approach of treating data centers as traditional loads is no longer appropriate due to their out-sized impact on the power system.

In response, there is emerging research interest in designing and operating data centers to be a "good" participant in the power grid. Terms such as grid-aware, grid-friendly, grid-integrated, or grid-interactive data centers have been introduced. The efforts can be categorized by the type of grid services provided, such as demand response \cite{wierman_opportunities_2014, ghatikar_demand_2012, vaidhynathan_vulcan_2025, filippi_how_2023}, frequency regulation \cite{fu_assessments_2020, wang_frequency_2019, filippi_how_2023}, and power ramp rate limits \cite{choukse_power_2025}, carbon reduction \cite{bostandoost_lacs_2024, bian_cafe_2024, priya_energy-minimizing_2024, lin_adapting_2023}. From this perspective, voltage ride-through has so far been overlooked as a grid service that data centers can (or must) provide. 
Another categorization is based on the source of flexibility, such as workload scheduling \cite{bostandoost_lacs_2024, epri_dcflex_nodate, radovanovic_carbon-aware_2021, hall_carbon-aware_2024, liu_data_2013, savasci_pads_2024}, dynamic voltage and frequency scaling (DVFS) \cite{fu_assessments_2020, hou_fine-grained_2023, priya_energy-minimizing_2024,chen_voltage_2025}, cooling \& thermal storage \cite{zhu_energy-efficient_2025, cao_toward_2023, liu_state---art_2020}, UPS \cite{filippi_how_2023, fu_assessments_2020}, utility-scale batteries \cite{vaidhynathan_vulcan_2025}, back-up generation \cite{liu_data_2013}. In this work, we consider all the relevant sources of flexibility that can provide fast control capabilities required by the VRT time scale. Together, this work presents a new perspective into grid-friendliness by systematically analyzing the VRT requirements and fast-timescale flexibility resources in the data center.

\subsubsection{Voltage Ride-through}
Of the related works, voltage ride-through is severely under-studied in the context of data centers at the time of writing. From the grid perspective, data centers have been treated as traditional passive loads over which grid operators exert no control. From the data center perspective, there is no incentive for riding through a grid power disturbance and risking equipment damage. 

However, VRT requirements exist to make sure that large energy devices -- whether they are generators or large loads -- do not disconnect at the first sign of a disturbance (e.g. voltage dip). If too many devices suddenly trip offline during a fault, the problem on the grid can get worse instead of better. System voltages and frequencies can rapidly increase and exceed their upper limits. VRT requirements ensure that the power system load conditions do not change drastically during a fault event. As VRT grid codes are being proposed for data centers (Section \ref{sec:vrt_grid_codes}), more sophisticated VRT implementation needs to be considered. 

The core of data center protection mechanisms is the UPS, which isolates the data center devices from the grid in the event of a grid fault. 
Device-level VRT design and modeling is well-studied for traditional generators and inverter-based resources \cite{zeb_faults_2022, moheb_comprehensive_2022, howlader_comprehensive_2016}. On a high level, the approach is to jointly design hardware and corresponding control laws to ensure the device stays within thermal limits during an external fault event. In this work, we focus on a \textit{networked} approach to improve VRT capability of the entire power distribution system within a data center via voltage control of heterogeneous flexibility resources in a data center.

\subsubsection{Voltage Control}
We focus on voltage deviations occurring at timescales on the order of seconds. Conventionally, voltage regulation is performed using mechanical devices such as tap-changing transformers or capacitor banks~\cite{turitsyn_options_2011,bolognani_distributed_2013}. However, these devices cannot be adjusted frequently and therefore are unsuitable for the fast dynamics of LVRT events. For fast-timescale voltage control, extensive research has explored the use of inverter-based resources (such as energy storage, solar panels, wind turbines), which can adjust their power output rapidly and repeatedly without adversely affecting their lifespan~\cite{chen_data-driven_2020, wu_smart_2018}. For the distribution grid with communication capabilities, voltage control is typically formulated as a centralized or distributed optimization problem to coordinate the real-time power outputs of controllable nodes~\cite{zhang_optimal_2015, yeh_robust_2022, qu_optimal_2020}. To eliminate the communication requirements, decentralized control laws have also been proposed, in which each node adjusts its reactive power as a feedback function of its local voltage deviation~\cite{farivar_equilibrium_2013, li_real-time_2014, zhang_local_2013, baker_network-cognizant_2017}. For classes of feedback functions such as linear controllers and certain monotone control laws, the results in~\cite{ zhu_fast_2016, qu_optimal_2020, cui_decentralized_2022,feng_bridging_2023} establish the convergence of voltages to the safe operating range. However, these methods typically only include the control of reactive power. In addition, their applicability to data centers remains unclear. In this paper, we extend this framework to data center control with both active and reactive power. In addition, we demonstrate its applicability and trade-offs for data centers
with diverse infrastructure and communication capabilities.

\section{Challenge and Opportunity}

\subsection{Voltage Ride-through Requirements}
\label{sec:vrt}

\begin{figure*}[ht!]
    \centering
    \includegraphics[width=.8\linewidth]{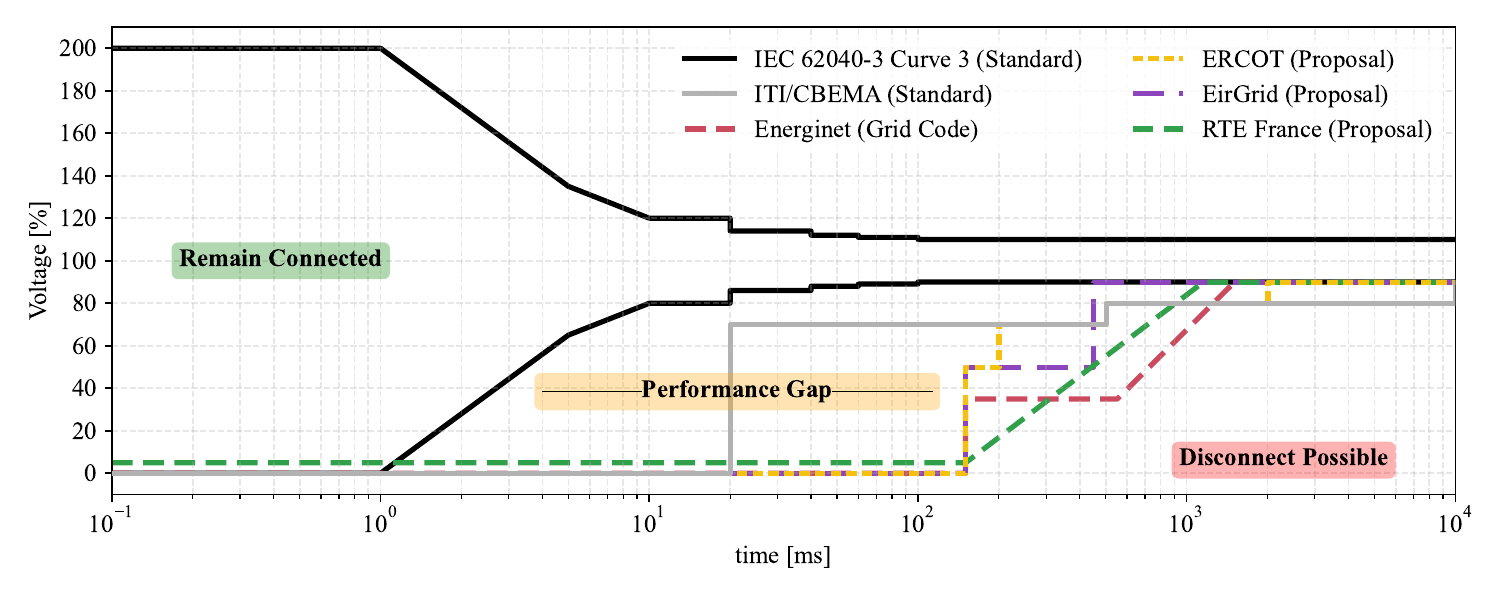}
    \caption{Voltage ride-through requirement curves.}
    \label{fig:VRT_curves}
\end{figure*}

With data centers becoming increasingly large and critical electrical loads, their impact on the stable operation of the power grid is becoming a major concern.  Unlike traditional loads, data center operations are highly sensitive to fluctuations in voltage and frequency. To protect their equipment, uninterruptible power supply (UPS) systems often disconnect data centers from the grid at the first sign of a disturbance, switching instead to local backup energy. While this ensures service continuity for the data center itself, it poses a serious challenge for the grid: when such massive loads disconnect simultaneously during a fault, they amplify the stress on the system rather than alleviating it. The sudden reduction in power demand can worsen instability and, in extreme cases, contribute to cascading failures. The July 2024 event in the Eastern Interconnection was a stark reminder of the dangers posed by the abrupt disconnection of large data center loads during grid disturbances.

To address this issue, power system operators are increasingly requiring data centers to provide fault ride-through (FRT) capabilities, which refers to the ability of grid-connected devices to remain connected and continue operating during a grid disturbance, commonly in the form of voltage or frequency deviations from nominal values. Specifically, voltage ride-through (VRT) refers to a device's ability to remain connected during a deviation in voltage magnitude for a short period of time without tripping offline. 
Requiring data centers to remain connected will ensure that the post-fault load conditions is similar to pre-fault conditions. Traditionally, VRT is required for bulk generators. However, due to the growing scale of data center loads, more stringent VRT requirements are being proposed for data centers recently. 

A voltage ride-through curve specifies the duration and magnitude of a voltage deviation that a device must sustain. Several example voltage-ride-through specifications are shown in \figref{fig:VRT_curves}.
Particularly, the low-voltage ride-through (LVRT) requirement has seen numerous proposals, as low-voltage events are far more common. The LVRT performance increases (or the requirement is more stringent) as a trace sits farther right (longer duration) and farther down (larger magnitude). If a voltage deviation falls below and to the right of the LVRT curve, the facilities are permitted to trip offline; otherwise, they are required to remain online.


\subsubsection{Device Standards}

Device manufacturers for protection devices (e.g. UPS) and IT loads follow industry standards such as IEC 52040-3 \cite{iec_uninterruptible_2011} and the ITI/CBEMA\footnote{Information Technology Industry Council (ITI), formerly the Computer and Business Equipment Manufacturers Association (CBEMA)} curve \cite{noauthor_iti_nodate}. The IEC 62040-3 standard specifies VRT performance for UPS and the ITI/CBEMA curve specifies that for general IT loads. These specifications are the least strict as they are device standards adopted by device manufacturers rather than a grid code imposed by transmission system operators (TSOs). In other words, the device standards are developed from the load perspective, rather than the perspective of the power grid.

\subsubsection{Grid Codes}
\label{sec:vrt_grid_codes} 

In recent years, VRT grid codes are being proposed by TSOs in many regions of the world due to the growing impact of large loads such as data centers. The ENTSO‑E\footnote{European Network of Transmission System Operators for Electricity} Demand Connection Code (EU Regulation 2016/1388) provides an EU‑level framework for specifying fault ride-through capabilities for demand facilities (large loads). The the detailed LVRT envelope parameters are determined by national TSOs \cite{european_union_commission_2016}. Among national implementations, Energinet (Denmark) publishes one of the most stringent demand facility LVRT specifications \cite{myen_technical_2024}. Similar grid codes for large loads are being proposed by RTE (France)\footnote{Réseau de Transport d'Électricité, TSO for France} \cite{rte_article_2024} and EirGrid (Ireland) \cite{eirgrid_voltage_2025}. More specifically, EirGrid mentions a “stay‑connected + staged recovery” paradigm in its FRT study template and has publicly stated that a grid code modification to impose FRT on Large Energy Users (LEUs) is being developed \cite{eirgrid_voltage_2025, liam_ryan_shaping_2024}. In the U.S., ERCOT\footnote{Electric Reliability Council of Texas} is evaluating Large Electronic Load (LEL) ride‑through criteria with stepwise non‑trip regions \cite{patrick_gravois_ercot_2025}. Aside from Denmark, whose VRT requirement is in effect, the rest are proposals presently under review at the time of writing.

Figure \ref{fig:VRT_curves} shows that there is a substantial performance gap between the device standards and the proposed grid codes (note the logarithmic time scale). This presents a substantial challenge for existing and new data centers, which must meet grid requirements while also ensure adequate protection for voltage-sensitive computing devices. Fortunately, data centers typically consists of many types of flexible resources. With the appropriate controller, the performance gap can be closed.

\subsection{Controllable Devices within Data Centers}

\begin{figure*}[t]
    \centering
    \includegraphics[width=1.0\linewidth]{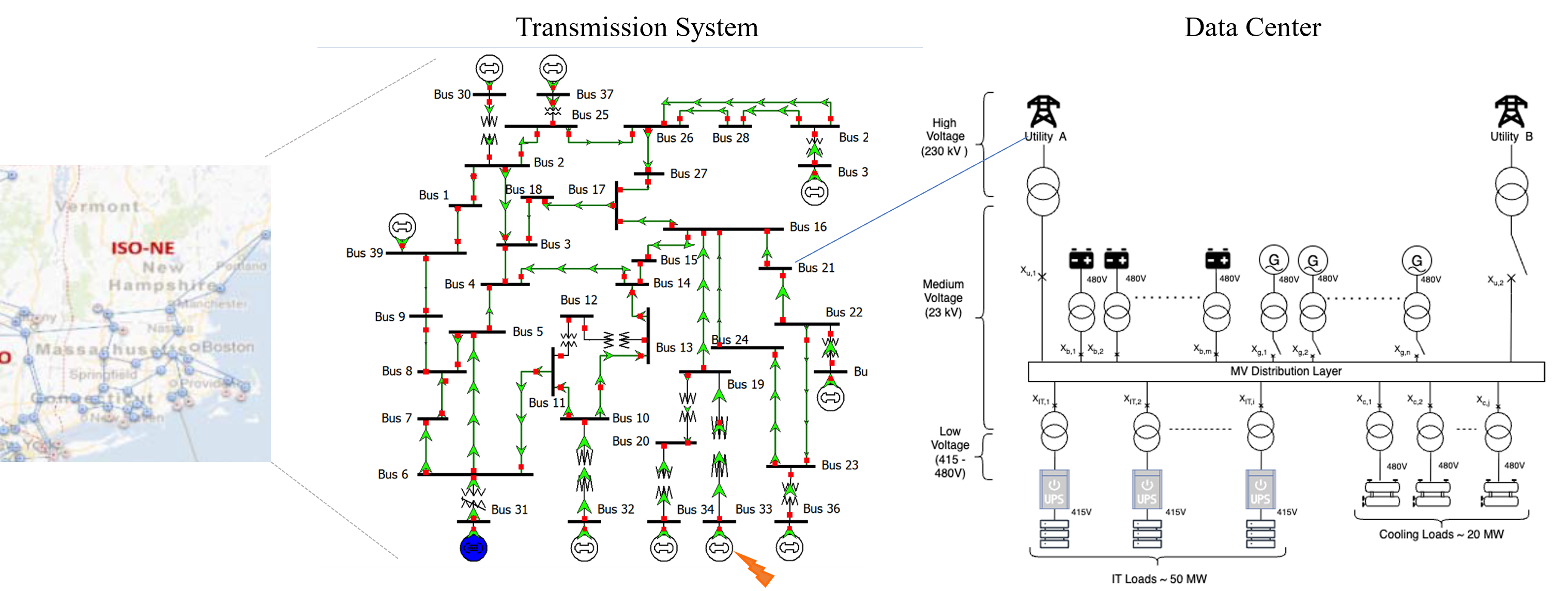}
    \caption{Data center and grid interconnection. For demonstration purposes, the transmission system is IEEE New England 39 bus test system, and the data center network is adapted from the Vulcan Test Platform in~\cite{vaidhynathan_vulcan_2025}. }
    \label{fig:network_interconnection}
\end{figure*}

\begin{table*}[t]
    \centering
    \caption{Summary of Data Center Controllable Devices and Characteristics}
    \begin{tabular}{c|ccccc}
         \textbf{Controllable Device} & \textbf{\% of Total Capacity} & \textbf{Controllability} & \textbf{Response time} & \textbf{Investment Cost} & \textbf{Control Cost} \\
         \hline
         IT load & 50-70\% & real power (+/-) & milliseconds to seconds & low & high\\
         Cooling & 30-50\% & real power (+/-) & seconds to minutes & low & low\\
         Battery UPS & 50-70\% & real \& reactive power (+/-) & milliseconds & low & low \\
         Utility-scale BESS & varies & real \& reactive power (+/-) & milliseconds & high & low \\
         Backup Generation & 100\% & real power (+) & minutes & high & high
    \end{tabular}
    \label{tab:load_types}
\end{table*}

Unlike conventional generators, a data center is itself a networked system composed of internal power distribution infrastructure and interconnected devices. A representative structure of this internal grid is shown in \figref{fig:network_interconnection}~\cite{vaidhynathan_vulcan_2025}. By appropriately managing controllable devices within the facility, internal voltages can be kept within safe operating limits, even in the presence of fluctuations at the point of connection to the utility grid. Building on this observation, we propose to enhance the low-voltage ride-through capability of data centers through the design of voltage controllers for their internal distribution systems. In this section, we outline the opportunities provided by controllable devices inside data centers. The networked model and corresponding voltage control strategies are developed in the following sections.

On the devices side, a data center primarily consists of IT loads (computing, networking, and storage), cooling loads, battery UPS, and backup generators \cite{ahmed_review_2021}. Some data centers may additionally have centralized utility-scale battery energy storage systems (BESS). Their response time, power capacities, and types of control afforded are summarized in Table \ref{tab:load_types} 
\cite{rong_optimizing_2016, dayarathna_data_2016, sawyer_calculating_nodate}. 
For each device, the second column indicates the typical nameplate power capacity as a fraction of the total power import capacity at the utility interconnection. Note that this corresponds to the maximum possible power injection, not the average energy consumed. The exact quantities depend on a number of factors, including the type of computing jobs, redundancy requirements, cooling system design, environmental factors, and scale. 
The third column describes whether real and/or reactive power can be increased and/or decreased. The fourth column describes the typical response time of each flexibility resource. 
The last two columns characterize the upfront investment cost to enable controllability%
\footnote{If a device is already required and present in a data center, we do not consider that as a part of the investment cost here. The investment cost is the additional cost to enable controllability. For example, battery UPS is already present in most data centers so that the additional investment cost to enable control is low.} %
and the operating control cost associated with each control action.

\subsubsection{IT Loads} The computing loads often allow for dynamic voltage and frequency scaling (DVFS) to throttle the power usage of CPUs and GPUs at the hardware level \cite{priya_energy-minimizing_2024}. The response time is typically at the milliseconds level as the control is applied at the hardware level. The tradeoff is slower computation and longer residence time. DVFS has little to no up-front cost since it does not require additional hardware, but it potentially has a high control cost, since delaying latency-sensitive workloads can be very costly or impossible due to service-level agreements.

On the other hand, to up-modulate the power consumption (which can be useful in high-voltage ride-through), power padding can be achieved by injecting dummy computations. Computing loads can also be modulated at longer time scales with longer control delay via workload management and shutting down servers. However, these time scales (seconds to minutes) are not relevant to voltage ride-through.

\subsubsection{Cooling Loads \& Thermal Storage} The inherent thermal inertia of servers and buildings as well as dedicated thermal storage (e.g. chilled water tanks) also offer flexibility. Cooling loads can leverage this thermal inertia to temporarily reduce or increase power consumption \cite{cao_toward_2023, zhu_energy-efficient_2025, zhang_survey_2021}. The amount of flexibility depends on the total thermal inertia, which can be increased with on-site thermal storage\footnote{Thermal storage is generally much cheaper than electrical storage, but is more limited in controllability.}. Since all data centers requires a cooling system, and thermal inertia is an inherent physical property, allowing controllability in cooling system is a low-cost way to create additional flexibility. 

\subsubsection{Electrical Storage} UPS, utility scale BESS and flywheels are common examples of electro-chemical and electro-mechanical storage in data centers \cite{lechl_uncertainty-aware_2025, fu_assessments_2020}. These storage units interface with the grid via power electronics. Therefore, they offer fast and flexible control in both real and reactive power injection, so long as the control inputs are exposed by device manufacturers. Unlike the other flexibility resources, electrical storage units requires high capital cost but has relatively low control cost. In other words, they are expensive to build, but during operations, the relative cost of dispatching storage units is often much lower than the cost of delaying computing jobs.

In particular, UPS, unlike utility-scale BESS, are distributed across data center clusters, and therefore provides fine-grained controllability for different segments of the power distribution network. Moreover, UPS already have built-in energy storage elements available for dispatch. Thus the \textit{additional} investment cost to enable controllability may be low. In fact, a new class of grid-interactive UPS \cite{watson_data_2025, paananen_grid-interactive_2021, filippi_how_2023} are now being offered by hardware vendors, which provide additional programmable grid services in addition to traditional backup \& protection functionalities. 
We anticipate that the need for grid-interactive power supply and protection devices will grow, as large-scale data centers and the grid are operated closer towards their design limits.

\subsubsection{Backup Generation} Finally, on-site backup thermal generators such as diesel generators provide long-term backup energy supply, but take several minutes to start. Therefore, thermal backup generators are not relevant for the time scale of VRT.

\section{System Model}

We consider an interconnected transmission system and data center distribution system, shown in \figref{fig:network_interconnection}. In a power grid, the transmission system is the high-voltage network that delivers electricity over long distances from power plants to substations, where it is then stepped down for local distribution to consumers. The data center connects to one bus of the transmission network to draw power from the bulk power system.  
The data center network includes high-voltage interconnections to the transmission network, a medium-voltage distribution layer (typically several to tens of kV), and low-voltage (typically 480 V) connections supplying various loads. The distribution lines within the data center are relatively short, while the main supply line(s) connecting the data center to the transmission network may potentially be longer.

\subsection{Faults in the Transmission System}
The transmission system can experience faults during daily operations, such as loss of generation and line outages. A typical consequence of such faults is a sudden voltage drop across the network. Many faults can be cleared within 10 ms by the grid’s autonomous protection mechanisms. 

The dynamics of power systems after a disturbance (including faults and changes of load) can be described by a set of DAEs as follows~\cite{xia_galerkin_2019, chiang_direct_nodate}:

\begin{equation}\label{eq:Dynamic}
\left\{\begin{array}{c}
\dot{\bm{x}}=\bm{f}(\bm{x}, \bm{y}, \bm{a};\bm{p}^{DC},\bm{q}^{DC}) \\
\bm{0}=\bm{h}(\bm{x}, \bm{y}, \bm{a};\bm{p}^{DC},\bm{q}^{DC})
\end{array}\right.
\end{equation}
where $\bm{x} \in \mathbb{R}^{l}$, $\bm{y} \in \mathbb{R}^{m}$,  $\bm{a} \in \mathbb{R}^{d}$ are the state variables, algebraic variables and external input variables, respectively. 
The impact of data centers on the power system dynamics are reflected by their active power injection $\bm{p}^{DC}$ and reactive power injection $\bm{q}^{DC}$ at the point of connection with the transmission system. 
The differential equation  $\bm{f}: \mathbb{R}^{l} \times \mathbb{R}^{m} \times \mathbb{R}^{d} \rightarrow \mathbb{R}^{l}$ typically describes the internal dynamics of devices such as the speed and angle of generator rotors, the
dynamics transmission lines, dynamically modeled loads 
and their control systems. 
Correspondingly,  $\boldsymbol{x} \in \mathbb{R}^{l}$ is the state variables such as generator rotor angles, generator velocity, electromagnetic flux, and control system internal variables. The set of algebraic equations $\bm{h}: \mathbb{R}^{l} \times \mathbb{R}^{m} \times \mathbb{R}^{d} \rightarrow \mathbb{R}^{m}$ describes the electrical transmission system and interface equations.  Correspondingly, 
$\bm{y} \in\mathbb{R}^{m}$ is the algebraic variable such as voltage magnitude and angles. 
The external input variables $\bm{a}\in \mathbb{R}^{d}$ acting on the equations include power injection from generators, automatic generation control systems, fault-response actions, etc.~\cite{chiang_direct_nodate, cui_frequency_2024}.

In particular, the system voltages (a component of $\bm{y}$) changes as the DAEs evolve. Although these voltage fluctuations originate outside of data centers, they directly affect the internal voltage levels of data centers through the point of connection (as shown in Figure~\ref{fig:network_interconnection}). We will elaborate on this relation in the next subsection.




\subsection{Data Center Model}

A data center interacts with the grid through the point of connection with the utility. Inside a data center is a distribution grid that connects the utility grid, transformers, UPS units, and loads. The distribution grid is typically a radial network consisting of nodes and their interconnections, where each node represents a specific component such as a UPS, server rack, cooling load, or batteries. Let $n$ be the total number of nodes within the distribution grid. 
The active power injection of each node $i$ is denoted by $p_i$ for $i=1,\cdots,n$, where $p_i>0$ indicates that the node injects power into the network, and $p_i<0$ indicates that the node consumes power. For batteries or UPS equipped with AC/DC inverters, they can also provide reactive power by adjusting the phase angle between AC voltage and current. We denote reactive power of node $i$ as $q_i$, with $q_i=0$ for buses without reactive power injection.

Let $v_0$ be the voltage at the root node, and $v_i, i=1,\cdots,n$ be the voltage of the node $i$ inside data centers. By physical laws of power flow, $v_i$ is jointly determined by the active power $\bm{p}=\left(p_1,\cdots,p_n\right)$ and reactive power $\bm{q}=\left(q_1,\cdots,q_n\right)$ over the  entire network. Assuming the data center operates under a balanced three-phase condition, the voltage dynamics can be approximated using the Linear DistFlow model, given by~\cite{baran_optimal_1989, zhu_fast_2016} 
\begin{align}\label{eq:lindisflow}
    \bm{v} = \bm{R}\bm{p}+\bm{X}\bm{q} + v_0\mathbb{1}
\end{align}
where $\mathbb{1} \in\mathbb{R}^n$ is the vector of ones, and $\bm{R},\bm{X} \in\mathbb{R}^{n\times n}$ are positive definite matrices describing the network topology and parameters (i.e., resistance and reactance). 

We summarize the interconnections between transmission systems and data centers in Figure~\ref{fig:Int_simulation}. Assuming a lossless distribution network, the power injection of data centers to the grid is the summation of power in the internal network $\bm{p}^{DC}=\sum_{i=1}^n p_i$ and $\bm{q}^{DC}=\sum_{i=1}^nq_i$. In turn, voltage $v_0$ of the transmission system impact data center internal voltage through~\eqref{eq:lindisflow}.  In the next section, we establish the control law for $\bm{p}$ and $\bm{q}$ to regulate the data center internal voltage $\bm{v}$ around its nominal value. 
In Section~\ref{sec:experiments}, we will present an integrated test system that simulates both the transient dynamics of the transmission system to faults and the data center distribution network under voltage control. 

\begin{figure}[h!]
    \centering
    \includegraphics[width=\linewidth]{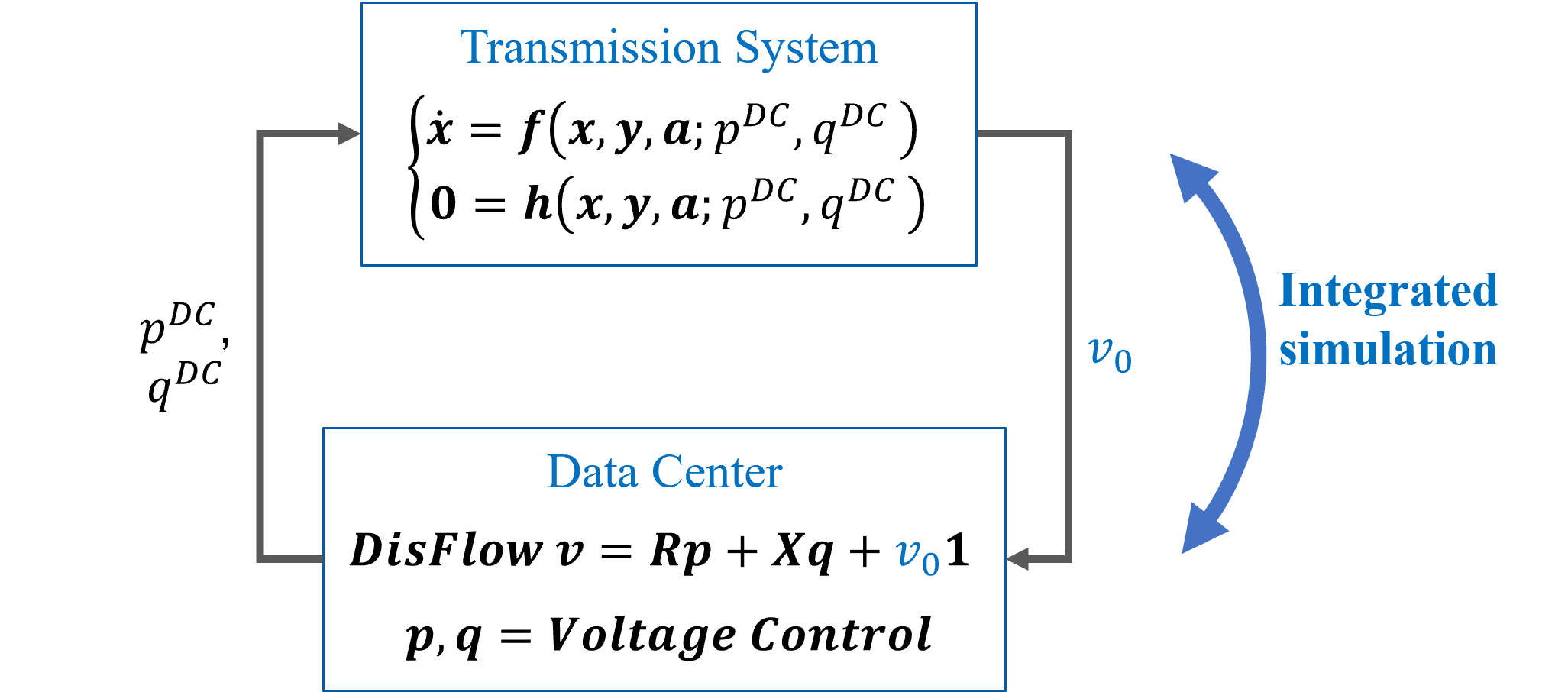}
    \caption{Modeling the interactions between the transmission system and data centers.}    \label{fig:Int_simulation}
\end{figure}

\section{Control to Enhance the Low-Voltage Ride-Through Capabilities}
This section illustrates how to enhance the low-voltage ride-through capability of data centers through the design of voltage control laws for their internal distribution systems. We compare centralized and decentralized voltage control strategies so as to allow data center operators to select the suitable approach and construct effective solutions to meet low-voltage ride-through requirements.

\subsection{Centralized Controller}

Centralized controllers coordinate all controllable devices by solving a global optimization problem and simultaneously dispatching the resulting setpoints to each device. This approach relies on low-latency communication to enable real-time data collection and online dispatch across all controllable resources.
A centralized optimization problem can be formulated as~\cite{zhang_optimal_2015, yeh_robust_2022, qu_optimal_2020}:
\begin{subequations}\label{eq:centralized}
\begin{align}
\min_{\bm{q}_t, \bm{p}_t} \quad & 
\Tilde{\bm{v}}_{t}^\top \bm{Q}_t \Tilde{\bm{v}}_{t} + \bm{q}_t^\top \bm{W}_t^q \bm{q}_t+ \bm{p}_t^\top \bm{W}_t^p \bm{p}_t \\
\text{subject to} \quad 
& \tilde{\bm{v}}_{t} = \bm{R}\bm{p}_t+\bm{X}\bm{q}_t +v_0 \mathbb{1}-\bm{v}^{ref}, \quad t = 0, \ldots, T-1\\
    \quad & \underaccent{\bar}{\bm{q}}_t \leq \bm{q}_t \leq \bar{\bm{q}}_t\\
    \quad & \underaccent{\bar}{\bm{p}}_t \leq \bm{p}_t \leq \bar{\bm{p}}_t,
\end{align}
\end{subequations}
where $\tilde{\bm{v}}_t=\bm{v}_t-\bm{v}^{\text{ref}}$ is the voltage deviation from its reference value. For demonstration purposes, we adopt a quadratic cost function where $\bm{Q}_t , \bm{W}_t^q, \bm{W}_t^p$ denote the weights associated with the costs of voltage deviations, reactive power, and active power, respectively. Any convex cost function can be used without affecting the analytical framework of this paper. The upper and lower bounds for reactive power at the time $t$ is $\underaccent{\bar}{\bm{q}}_t$ and $\bar{\bm{q}}_t$, respectively.
Similarly, the upper and lower bounds for active power at the time $t$ is $\underaccent{\bar}{\bm{p}}_t$ and $\bar{\bm{p}}_t$, respectively. The bounds are determined by the status of controllable devices and the nominal computing load at the time step $t$.

In data centers where low-latency communication network is available for the power delivery infrastructure, and where controllable devices have fast response times, the centralized approach is suitable since it can optimally trade off control cost with disturbance rejection up to device power limits. The effect of delay is studied in Section \ref{sec:experiments}.

\subsection{Decentralized Controller}

The latency in hierarchical control layers and the communication infrastructure in existing data centers may be prohibitively large for centralized control in the timescales of LVRT. To ensure fast and reliable voltage support without relying on centralized coordination, a decentralized control law where each device only uses local information becomes necessary. 
Specifically, one representative decentralized control law is to incrementally adjust the active and reactive power at each node based on its local voltage deviation~\cite{, qu_optimal_2020, cui_decentralized_2022}, written as 
\begin{equation}\label{eq:decentralized_control}
\begin{split}
    p_{i,t} = p_{i,t-1} - k^p_i\tilde{v}_{i,t-1},\\
    q_{i,t} = q_{i,t-1} - k^q_i\tilde{v}_{i,t-1},
\end{split}
\end{equation}
where $k_i^p$ and $k_i^q$ are the tunable control gains for regulating the real and reactive power of each node $i=1,\cdots, n$. 
Note that this design requires no real-time communication among nodes. It scales naturally with the size of the data center and remains robust against communication delays or equipment failures.

Despite the decentralized controller design, the voltage at each node is jointly influenced by the actions of all other nodes coupled through the Linear DistFlow model in~\eqref{eq:lindisflow}. By appropriately tuning the decentralized control gains, we can ensure that the collective action of the decentralized controllers drives the voltages toward their reference values. The conditions for voltage convergence are established in the following theorem.

\begin{theorem}[Convergence of voltage]\label{thm:stability}
Let $ \bm{K}^p:=diag(k_1^p,\cdots, k_n^p)$ and $ \bm{K}^q:=diag(k_1^q,\cdots, k_n^q)$ be the diagonal matrices formed by control gains in~\eqref{eq:decentralized_control}.
If the spectral radius of $\left(\bm{I}-\bm{R}\bm{K}^p-\bm{X}\bm{K}^q\right)$ is smaller than 1, 
then the voltage deviation $\tilde{\bm{v}}$ will exponentially converge to zero.
\end{theorem}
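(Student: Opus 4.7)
The plan is to reduce the closed-loop dynamics to a single linear recursion in $\tilde{\bm{v}}_t$ and then invoke the standard spectral-radius criterion for discrete-time linear systems.

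First, I would write the decentralized control law~\eqref{eq:decentralized_control} in stacked matrix form,
\begin{equation*}
\bm{p}_t = \bm{p}_{t-1} - \bm{K}^p \tilde{\bm{v}}_{t-1}, \qquad \bm{q}_t = \bm{q}_{t-1} - \bm{K}^q \tilde{\bm{v}}_{t-1},
\end{equation*}
using the diagonal matrices $\bm{K}^p$ and $\bm{K}^q$ defined in the theorem. Substituting these updates into the Linear DistFlow model~\eqref{eq:lindisflow}, and using that $\bm{v}_{t-1} = \bm{R}\bm{p}_{t-1} + \bm{X}\bm{q}_{t-1} + v_0\mathbb{1}$, the $v_0 \mathbb{1}$ term cancels between consecutive time steps and I obtain
\begin{equation*}
\bm{v}_t = \bm{v}_{t-1} - \bigl(\bm{R}\bm{K}^p + \bm{X}\bm{K}^q\bigr)\tilde{\bm{v}}_{t-1}.
\end{equation*}

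Next, since $\bm{v}^{ref}$ is constant in time, subtracting it from both sides converts this into a recursion purely in the deviation variable:
\begin{equation*}
\tilde{\bm{v}}_t = \bigl(\bm{I} - \bm{R}\bm{K}^p - \bm{X}\bm{K}^q\bigr)\tilde{\bm{v}}_{t-1} =: \bm{M}\,\tilde{\bm{v}}_{t-1}.
\end{equation*}
Iterating gives $\tilde{\bm{v}}_t = \bm{M}^t\,\tilde{\bm{v}}_0$. I would then invoke the classical fact (via the Jordan form, or equivalently Gelfand's formula, which gives an induced matrix norm $\|\cdot\|$ and a constant $C$ with $\|\bm{M}^t\| \le C\,\bigl(\rho(\bm{M}) + \varepsilon\bigr)^t$ for any small $\varepsilon > 0$) that the hypothesis $\rho(\bm{M}) < 1$ implies $\|\bm{M}^t\| \to 0$ geometrically. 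Choosing $\varepsilon$ so that $\rho(\bm{M}) + \varepsilon < 1$, this yields $\|\tilde{\bm{v}}_t\| \le C\,(\rho(\bm{M})+\varepsilon)^t \|\tilde{\bm{v}}_0\|$, i.e., exponential convergence to zero.

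There is no real obstacle here: the only modeling care needed is that the quasi-static Linear DistFlow relation is assumed to hold at every discrete step of the controller, so that the voltage at step $t$ is determined algebraically by $(\bm{p}_t,\bm{q}_t,v_0)$. Once that is granted, the $v_0\mathbb{1}$ cancellation and the reduction to a time-invariant linear map are mechanical, and the rest is the standard spectral criterion. A minor subtlety worth flagging in a sentence is that the bounds $\underaccent{\bar}{\bm{p}}_t \le \bm{p}_t \le \bar{\bm{p}}_t$ and the analogous reactive-power bounds are not enforced in~\eqref{eq:decentralized_control}; the theorem implicitly addresses the unsaturated regime, where \eqref{eq:decentralized_control} is applied without projection.
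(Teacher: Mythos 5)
Your proposal is correct and follows essentially the same route as the paper's proof: substitute the incremental control law into the Linear DistFlow model, identify $\bm{R}\bm{p}_{t-1}+\bm{X}\bm{q}_{t-1}+v_0\mathbb{1}-\bm{v}^{ref}$ as $\tilde{\bm{v}}_{t-1}$ to obtain the closed-loop recursion $\tilde{\bm{v}}_t=(\bm{I}-\bm{R}\bm{K}^p-\bm{X}\bm{K}^q)\tilde{\bm{v}}_{t-1}$, and apply the spectral-radius criterion for discrete-time linear systems. Your write-up is slightly more explicit than the paper's (spelling out the Gelfand/Jordan-form bound and noting the unsaturated-regime assumption), but the argument is the same.
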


\begin{proof}
    Plugging~\eqref{eq:decentralized_control} into~\eqref{eq:lindisflow} yields a dynamical system written as
\begin{equation}\label{eq:transition}
\begin{split}
\tilde{\bm{v}}_{t} &= \bm{R}\left(\bm{p}_{t-1}-\bm{K}^p\tilde{\bm{v}}_{t-1}\right)+\bm{X}\left(\bm{q}_{t-1}-\bm{K}^q\tilde{\bm{v}}_{t-1}\right) +v_0 \mathbb{1}-\bm{v}^{ref} \\
&= \left(\bm{I}-\bm{R}\bm{K}^p-\bm{X}\bm{K}^q\right)\tilde{\bm{v}}_{t}.
\end{split}
\end{equation}
Thus, the incremental control law creates a dynamical system with transition matrix $\left(\bm{I}-\bm{R}\bm{K}^p-\bm{X}\bm{K}^q\right)$ and equilibrium $\tilde{\bm{v}}=0$. The exponential convergence to the equilibrium is guaranteed if the spectral radius of $\left(\bm{I}-\bm{R}\bm{K}^p-\bm{X}\bm{K}^q\right)$ is smaller than 1.
\end{proof}
The condition in Theorem~\ref{thm:stability} can be numerically checked to verify whether the control gains stabilize the voltage. After adding a mild condition on the ratio between resistance and reactance of power lines in data centers, we have the following convex set for stabilizing control gains.  

\begin{theorem}[Decentralized Stabilizing Conditions]\label{thm:decentralized}
Suppose the ratio of resistance to reactance for each power line in the distribution system is $\rho$.  If  $0\prec\rho\bm{K}^p+\bm{K}^q\prec 2\mathbf{X}^{-1}$,
then the equilibrium point $\tilde{\bm{v}}=0$ of the dynamic system in  \eqref{eq:transition} is locally exponentially stable.
\end{theorem}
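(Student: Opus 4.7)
The plan is to reduce the hypothesis directly to the spectral condition already proved in Theorem~\ref{thm:stability}. The key observation is that the assumption ``resistance-to-reactance ratio is $\rho$ for each power line'' means that every line has $r_\ell = \rho x_\ell$; since in the Linear DistFlow model $\bm{R}$ and $\bm{X}$ are linear combinations of the per-line resistances and reactances with the same topology-dependent coefficients, this lifts to $\bm{R}=\rho\bm{X}$. Substituting into the transition matrix of~\eqref{eq:transition} gives
\begin{equation*}
\bm{I}-\bm{R}\bm{K}^p-\bm{X}\bm{K}^q \;=\; \bm{I}-\bm{X}\bm{M},\qquad \bm{M}:=\rho\bm{K}^p+\bm{K}^q,
\end{equation*}
and so by Theorem~\ref{thm:stability} it suffices to show that the spectral radius of $\bm{I}-\bm{X}\bm{M}$ is strictly less than $1$ whenever $0\prec\bm{M}\prec 2\bm{X}^{-1}$.

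The next step is to pin down the spectrum of $\bm{X}\bm{M}$. Although $\bm{X}\bm{M}$ need not be symmetric, the diagonal matrix $\bm{M}$ is positive definite and thus has a well-defined positive diagonal square root, so $\bm{X}\bm{M}$ is similar to the symmetric positive definite matrix $\bm{M}^{1/2}\bm{X}\bm{M}^{1/2}$. Hence every eigenvalue of $\bm{X}\bm{M}$ is real and strictly positive. To upper bound them, I would conjugate the Loewner inequality $\bm{M}\prec 2\bm{X}^{-1}$ by $\bm{X}^{1/2}$ to obtain $\bm{X}^{1/2}\bm{M}\bm{X}^{1/2}\prec 2\bm{I}$; since $\bm{X}^{1/2}\bm{M}\bm{X}^{1/2}$ and $\bm{M}^{1/2}\bm{X}\bm{M}^{1/2}$ share the same eigenvalues as $\bm{X}\bm{M}$, all eigenvalues of $\bm{X}\bm{M}$ lie in $(0,2)$. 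Therefore the eigenvalues of $\bm{I}-\bm{X}\bm{M}$ lie in $(-1,1)$, its spectral radius is strictly less than $1$, and Theorem~\ref{thm:stability} delivers exponential convergence of $\tilde{\bm{v}}$ to zero.

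I expect the only genuine obstacle to be the non-symmetry of $\bm{X}\bm{M}$: one must resist the temptation to treat the Loewner bound on $\bm{M}$ as if it directly bounded the eigenvalues of $\bm{X}\bm{M}$, and instead route through the similarity to a symmetric matrix before comparing. Once this bookkeeping is done, the rest is routine linear algebra built on the reduction $\bm{R}=\rho\bm{X}$ and the spectral criterion of Theorem~\ref{thm:stability}. A secondary sanity check would be that the hypothesis indeed yields a \emph{convex} set of admissible gains: since $\rho\bm{K}^p+\bm{K}^q$ is linear in $(\bm{K}^p,\bm{K}^q)$ and the Loewner ordering defines a convex cone, both inequalities $0\prec\rho\bm{K}^p+\bm{K}^q$ and $\rho\bm{K}^p+\bm{K}^q\prec 2\bm{X}^{-1}$ cut out a convex region, matching the convexity claim preceding the theorem.
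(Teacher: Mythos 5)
Your proposal is correct and follows essentially the same route as the paper's proof: reduce to $\bm{R}=\rho\bm{X}$ under the uniform $r/x$ ratio, rewrite the transition matrix as $\bm{I}-\bm{X}\left(\rho\bm{K}^p+\bm{K}^q\right)$, pass to a symmetric positive definite matrix by similarity, and bound its eigenvalues in $(0,2)$ using $\bm{X}\succ 0$ and $\rho\bm{K}^p+\bm{K}^q\prec 2\bm{X}^{-1}$ so that the spectral radius condition of Theorem~\ref{thm:stability} applies. The only difference is cosmetic: you conjugate by $\bm{X}^{1/2}$ while the paper conjugates by $\left(\rho\bm{K}^p+\bm{K}^q\right)^{1/2}$, which yields the same spectrum and the same conclusion.
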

\begin{proof}
By physical law, $\bm{R}=\bm{M}^{-T}\bm{D}_r\bm{M}^{-1}$ and $\bm{X}=\bm{M}^{-T}\bm{D}_x\bm{M}^{-1}$, where $\bm{M}$ is the graph Laplacian matrix of the distribution system, and $\bm{D}_r$ and $\bm{D}_x$ are diagonal matrices formed by stacking the resistance and reactance of power lines~\cite{ zhu_fast_2016}. If the ratio of resistance to reactance for each power line is $\rho$, then $\bm{R}=\rho\bm{X}$ and therefore
    $\bm{I}-\bm{R}\bm{K}^p-\bm{X}\bm{K}^q = \bm{I}-\bm{X}\left(\rho\bm{K}^p+\bm{K}^q\right)$.

    Next, we prove that the eigenvalues of $\bm{I}-\bm{X}\left(\rho\bm{K}^p+\bm{K}^q\right)$ is the same as that of $\bm{I}-\left(\rho\bm{K}^p+\bm{K}^q\right)^{1/2}\bm{X}\left(\rho\bm{K}^p+\bm{K}^q\right)^{1/2}$. Let $\lambda$ be the eigenvalue and $\bm{w}$ be the eigenvector of $\bm{I}-\bm{X}\left(\rho\bm{K}^p+\bm{K}^q\right)$, then $\left(\bm{I}-\bm{X}\left(\rho\bm{K}^p+\bm{K}^q\right)\right)\bm{w}=\lambda\bm{w}$.
Note that $\left(\rho\bm{K}^p+\bm{K}^q\right)$ is a diagonal matrix with positive diagonal elements, we have
\begin{equation*}
\begin{split}
        &\left(\bm{I}-\left(\rho\bm{K}^p+\bm{K}^q\right)^{1/2}\bm{X}\left(\rho\bm{K}^p+\bm{K}^q\right)^{1/2}\right)\left(\rho\bm{K}^p+\bm{K}^q\right)^{1/2}\bm{w}\\
        &=\left(\rho\bm{K}^p+\bm{K}^q\right)^{1/2}\left(\bm{I}-\bm{X}\left(\rho\bm{K}^p+\bm{K}^q\right)\right)\bm{w}\\
        &=\lambda\left(\rho\bm{K}^p+\bm{K}^q\right)^{1/2}\bm{w}.
\end{split}
\end{equation*}
Therefore, $\lambda$ is also the eigenvector of $\bm{I}-\left(\rho\bm{K}^p+\bm{K}^q\right)^{1/2}\bm{X}\left(\rho\bm{K}^p+\bm{K}^q\right)^{1/2}$. To prove that the magnitude of eigenvalues $\lambda$ is smaller than 1, it suffices to show $-\bm{I}\!\prec\!\bm{I}-\left(\rho\bm{K}^p\!+\!\bm{K}^q\right)^{1/2}\!\bm{X}\!\left(\rho\bm{K}^p\!+\!\bm{K}^q\right)^{1/2}\prec\bm{I}$. The right side inequality holds because $\bm{X}\succ0$, while the left-side inequality holds when $0\prec\rho\bm{K}^p+\bm{K}^q\prec 2\mathbf{X}^{-1}$. This concludes the proof. 
\end{proof}

Theorem~\ref{thm:decentralized} characterizes a convex set of stabilizing control gains that can be imposed as constraints when optimizing the controller design. We establish the following optimization program of decentralized control gains to minimize the summation of costs over $T$ steps:

\begin{subequations}\label{eq:decentralized_opt}
\begin{align}
\min_{k^{q(p)}_1,\cdots,k^{q(p)}_n} \quad & \sum_{t=0}^T \Tilde{\bm{v}}_{t}^\top \bm{Q}_t \Tilde{\bm{v}}_{t} + \bm{q}_t^\top \bm{W}_t^q \bm{q}_t+ \bm{p}_t^\top \bm{W}_t^p \bm{p}_t  \\
\text{subject to} \quad &     \bm{v}_t = \bm{R}\bm{p}_t+\bm{X}\bm{q}_t + \mathbb{1}, \quad t = 0, \ldots, T-1, \\
& q_{i,t} = q_{i,t-1} - k_i^q(v_{i,t-1}-1),\\
& p_{i,t} = p_{i,t-1} - k_i^p(v_{i,t-1}-1),\\
& k_i^p, k_i^q\text{ is stabilizing},\\
&\underaccent{\bar}{\bm{q}}_t \leq \bm{q}_t \leq \bar{\bm{q}}_t\\
    & \underaccent{\bar}{\bm{p}}_t \leq \bm{p}_t \leq \bar{\bm{p}}_t,\\
\end{align}
\end{subequations}
where the cost function and constraints on actions coincide with that of the centralized optimization problem in~\eqref{eq:centralized}. Note that the optimization is not a standard LQR formulation since the controller is decentralized. Therefore, we adopt the learning-based framework in~\cite{cui_leveraging_2025} for solving~\eqref{eq:decentralized_opt} through gradient descent. 

As long as the control gains satisfies the stability bounds in Theorem~\ref{thm:stability} or Theorem~\ref{thm:decentralized}, the system with local controllers are guaranteed to be exponentially stable. As a corollary, the nodal voltages are exponentially input-to-state stable against disturbances from the transmission system voltage deviations~\cite{cui_leveraging_2025}.

Although the decentralized controller achieves exponential stability, the convergence rate may be slow due to the requirement that a network of local controllers must be stable. This is mitigated if the discrete-time closed-loop controller has a small time step, i.e. 
by updating the control actions more frequently,
so that disturbance rejection is enhanced without compromising stability.

\section{Experimental Results}
\label{sec:experiments}
We evaluate the performance of the proposed control strategies on an integrated test system consisting of a transmission network and a data center power distribution network. We release our code implementation at \href{https://github.com/caltech-netlab/datacenter-voltage-control.git}{https://github.com/caltech-netlab/datacenter-voltage-control.git}, which will provide the research community with a useful tool for studying power system dynamics with data center impacts. 
\subsection{Simulation Setup}
\begin{table*}[]
    \centering
    \caption{Summary of Data Center Flexible Resources for Case Study}
    \begin{tabular}{c|ccc}
         Load type & Nameplate Capacity (MVA) & Real Power Control Capacity (MW) & Reactive Power Control Capacity (MVar) \\
         \hline
         Utility scale BESS 1 \& 2 & 20 & -20 to 20 & -20 to 20 \\
         Battery UPS 1 to 8 & 5 & -5 to 5 & -5 to 5 \\
         Computing clusters 1 to 8 & 20 & +/-20\% of normal load & 0 \\
         Cooling load & 80 & +/-20\% of normal load & 0 \\
    \end{tabular}
    \label{tab:exp_loads}
\end{table*}

 The data center distribution network is adapted from the Vulcan Test Platform \cite{vaidhynathan_vulcan_2025}, which is interconnected to the IEEE 14-bus transmission system using the ANDES simulator \cite{cui_hybrid_2021}. 
We consider a network of 1 central cooling plant, 8 data center clusters each with a separate UPS, and 2 utility-scale battery energy storage systems. The nameplate capacity refers to the largest amount of apparent power injection. The amount of controllable power may be less. The values are summarized in Table \ref{tab:exp_loads}. 


Recent years have seen explosive growth in model size and complexity, evolving from early models trained on a single GPU to large language models trained using tens of thousands of GPUs simultaneously \cite{zhu_energy-efficient_2025}. The large power fluctuations due to simultaneous training leads to additional voltage fluctuations, therefore making the VRT problem even more challenging, where the control action must respond to both the external (large) disturbance from the grid and internal (smaller) disturbance from within the data center. The nominal computing load timeseries (Figure \ref{fig:computing_load}) is generated using the GPU utilization profile from large language model inference workloads \cite{patel_characterizing_2024} and multiplied by the nameplate capacity of each computing cluster. The nominal cooling load is assumed to be constant for the time-scale of interest (seconds) and is computed by the average computing utilization (capacity factor) multiplied by the nameplate capacity of the cooling plant.

\begin{figure}[h!]
    \centering
    \includegraphics[width=.9\linewidth]{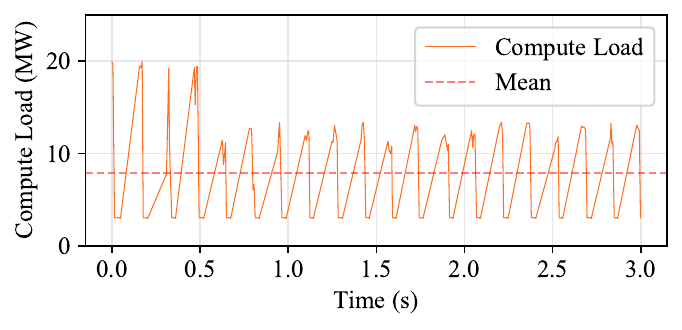}
    \caption{Data cluster computing load (per cluster)}
    \label{fig:computing_load}
\end{figure}

The transmission system serves a total of 278 MW of load, including 91MW of average load from the data center. Loads aside from the data center are treated as constant-power. The transmission system includes five operating synchronous generators, each with a power rating of 100MVA. We consider a low-voltage event due to the loss of a generator in the high-voltage transmission system. At time $t = 1$ second, a 100 MW generator is disconnected from the transmission system causing a system-wide drop in voltage. 

\subsection{Baseline: No Voltage Control}
We first establish a baseline scenario in the absence of voltage controllers and with traditional UPS protection mechanism. The data center protection mechanism disconnects all data center loads according to the IEC 62040-3 standard (Figure \ref{fig:VRT_curves}). The voltage trajectories and data center power injection is illustrated in Figure~\ref{fig:exp_high_batt_no_control}. After the loss of generation, the system voltages rapidly drops with no mitigating control actions from the data center. When the trip criteria is met, the data center is disconnected from the grid at $t\approx1.8$ seconds and the system voltages rapidly rise to 1.1 per-unit (p.u.)%
\footnote{Per-unit voltages are the voltage magnitudes normalized by their nominal value. 1.1 per-unit corresponds to 10\% above the nominal value.}
where emergency mitigation actions becomes necessary and further tripping of grid-connected devices may occur.

\begin{figure}[b]
    \centering
    \includegraphics[width=\linewidth]{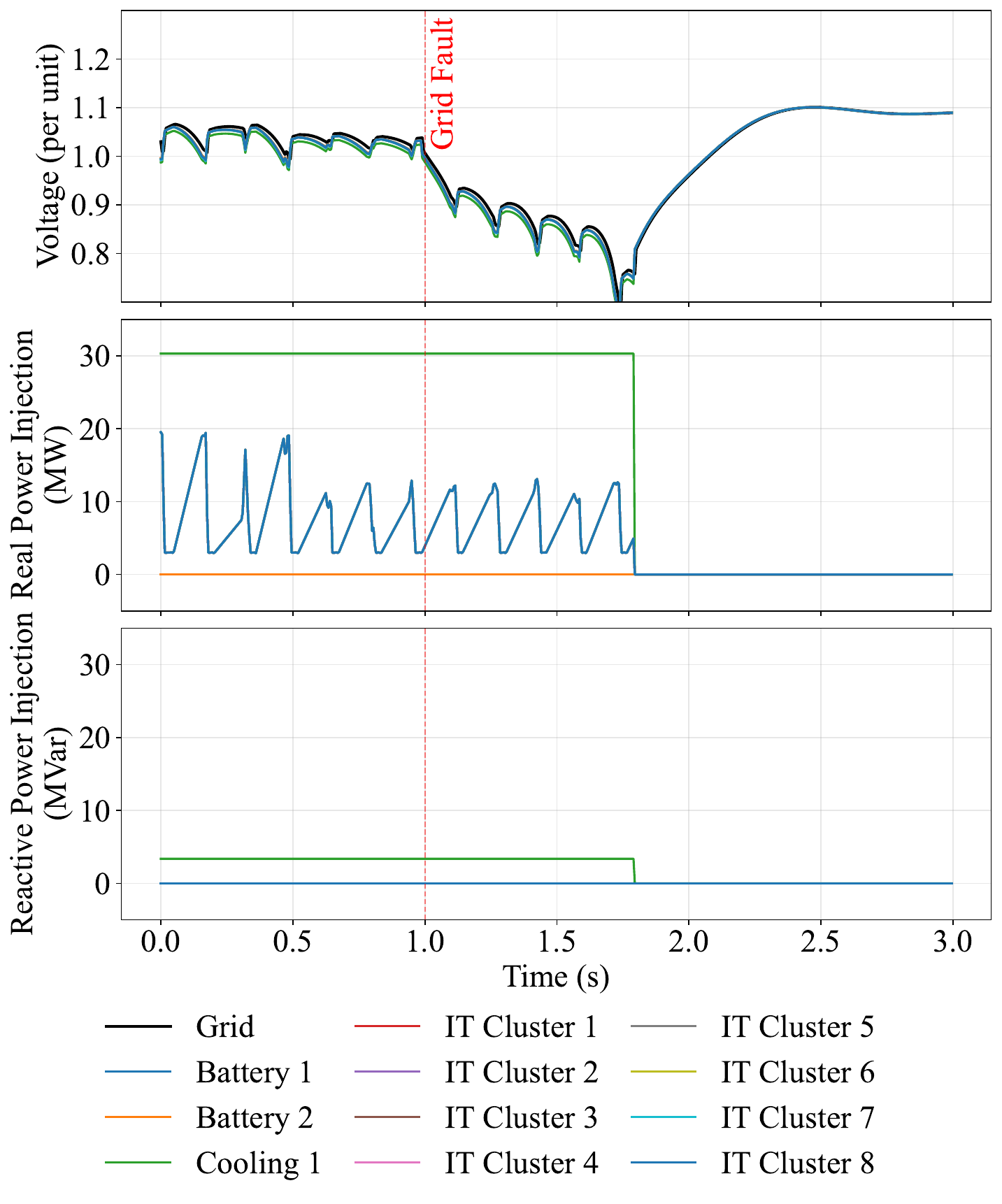}
    \caption{Data center voltages and load without voltage control.}
    \label{fig:exp_high_batt_no_control}
\end{figure}

\subsection{Centralized and Decentralized Control}
We now apply the proposed centralized and decentralized controllers to stabilize the voltages internal to the data center. For the centralized controller, the control delay is 50 milliseconds, reflecting realistic communication and computation latencies. As shown in Fig.~\ref{fig:exp_high_batt_centralized}, the centralized optimal controller stabilizes bus voltages well within 10\% of the nominal voltage. Despite the delay, the controller is able to damp oscillations and prevent tripping of UPS, thus providing fault ride-through capability.

In comparison, under the decentralized control strategy, each distributed energy resource adjusts its response based solely on local measurements. The decentralized control gains satisfy the stabilizing conditions in Section~\ref{thm:decentralized}, and we optimize the set of stable control gains via gradient descent through the learning-based framework in~\cite{cui_leveraging_2025}. For decentralized controllers, the control actions are updated at discrete time intervals of 5 milliseconds. The experimental results are shown in Fig.~\ref{fig:exp_high_batt_decentralized}. Despite the lack of centralized coordination, the decentralized controller achieves effective voltage stabilization within the data center distribution system. However, the decentralized scheme requires participation from more types of controllable resources due to the lack of centralized coordination.

\begin{table*}[]
    \centering
    \caption{Performance Evaluation of Different Control Schemes}
    \begin{tabular}{p{4.0cm}|p{2.4cm}p{2.2cm}p{3.1cm}p{3.4cm}}
         \textbf{Control Scheme} & \textbf{Largest Voltage Deviation (p.u.)} & \textbf{Mean Voltage Deviation (p.u.)} & \textbf{Average Real Power Control Effort (MW)} & \textbf{Average Reactive Power Control Effort (MVAr)} \\
         \hline
         No Voltage Control & 0.383 & 0.081 & 0.030 & 0.001 \\
         Centralized & 0.061 & 0.019 & 0.012 & 0.006 \\
         Decentralized & 0.087 & 0.016 & 0.007 & 0.006 \\
         Centralized (reactive power) & 0.075 & 0.025 & 0.000 & 0.008 \\
         Decentralized (reactive power) & 0.125 & 0.034 & 0.000 & 0.002 \\
         Centralized (200ms delay) & 0.117 & 0.039 & 0.010 & 0.010 \\
    \end{tabular}
    \label{tab:exp_metrics}
\end{table*}

\begin{figure}[b]
    \centering
    \includegraphics[width=1.0\linewidth]{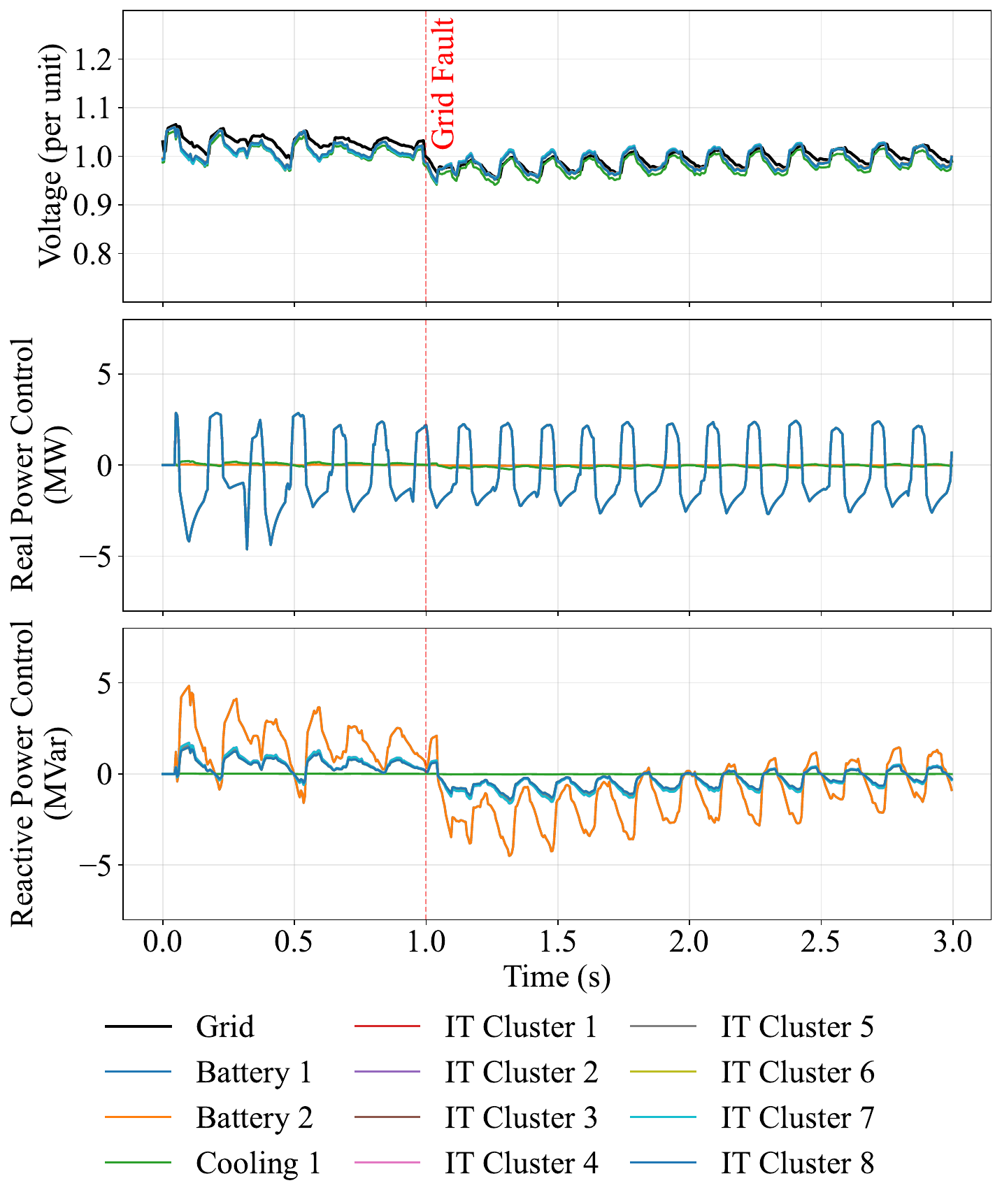}
    \caption{Data center voltages and control action with centralized controller.}
    \label{fig:exp_high_batt_centralized}
\end{figure}

\begin{figure}[b]
    \centering
    \includegraphics[width=1.0\linewidth]{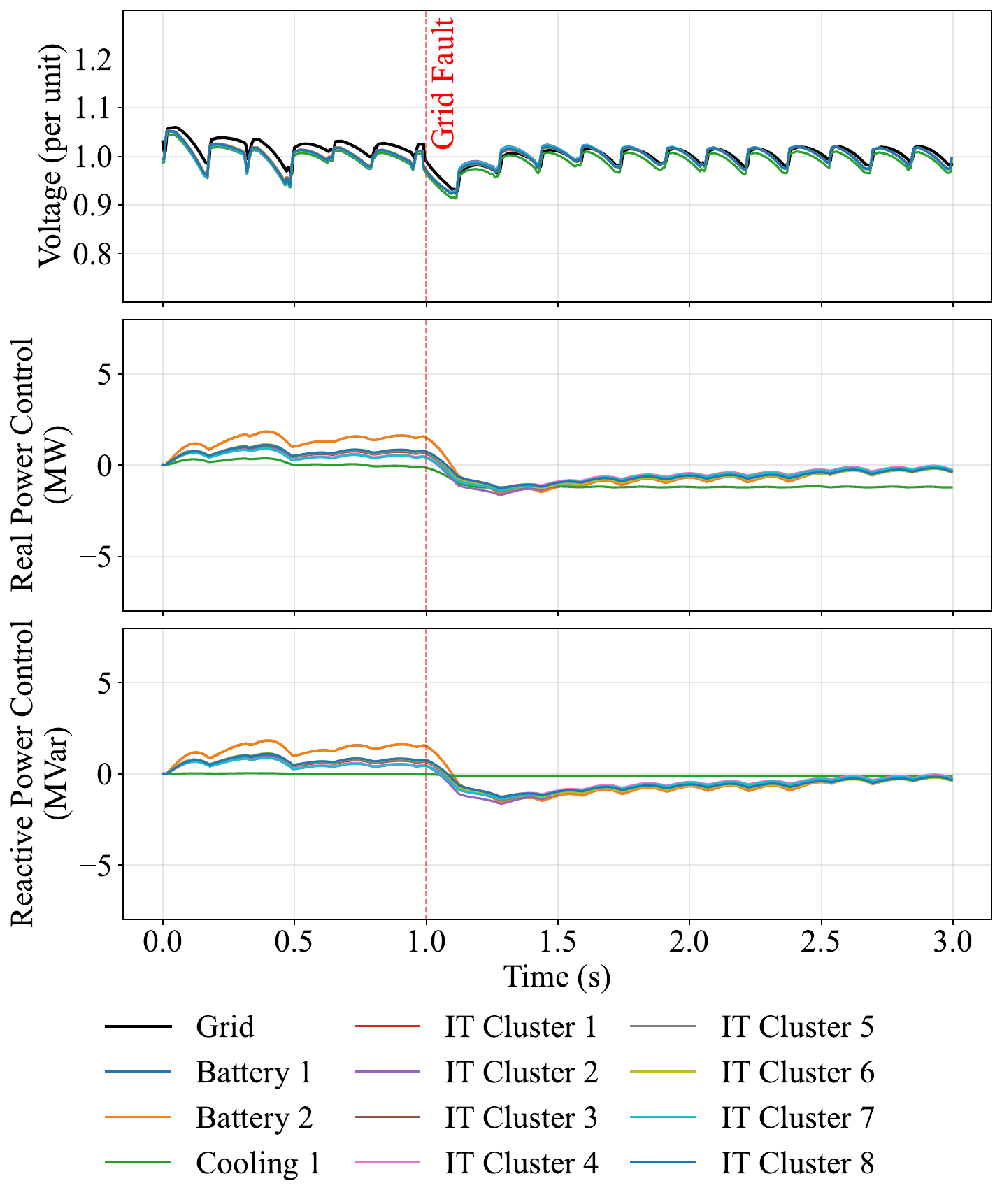}
    \caption{Data center voltages and control action with decentralized controller.}
    \label{fig:exp_high_batt_decentralized}
\end{figure}

\subsection{Effect of Delay on Centralized Control}
Next, we study the impact of delay in the centralized controller. Figure \ref{fig:delay_ablation} shows that the mean and maximum voltage deviation increases with delay, and the control effort generally increases with delay. The performance degrades gracefully. In this setting, compared to the de-centralized controller (Table \ref{tab:exp_metrics}), the centralized control is worse for delays longer than 10 milliseconds. In reality, the performance of the decentralized controller also depends on the discrete time interval at which the control actions are updated. The choice of the suitable controller depends on the real-world hardware limitations.

\begin{figure}[b]
    \centering
    \includegraphics[width=1.0\linewidth]{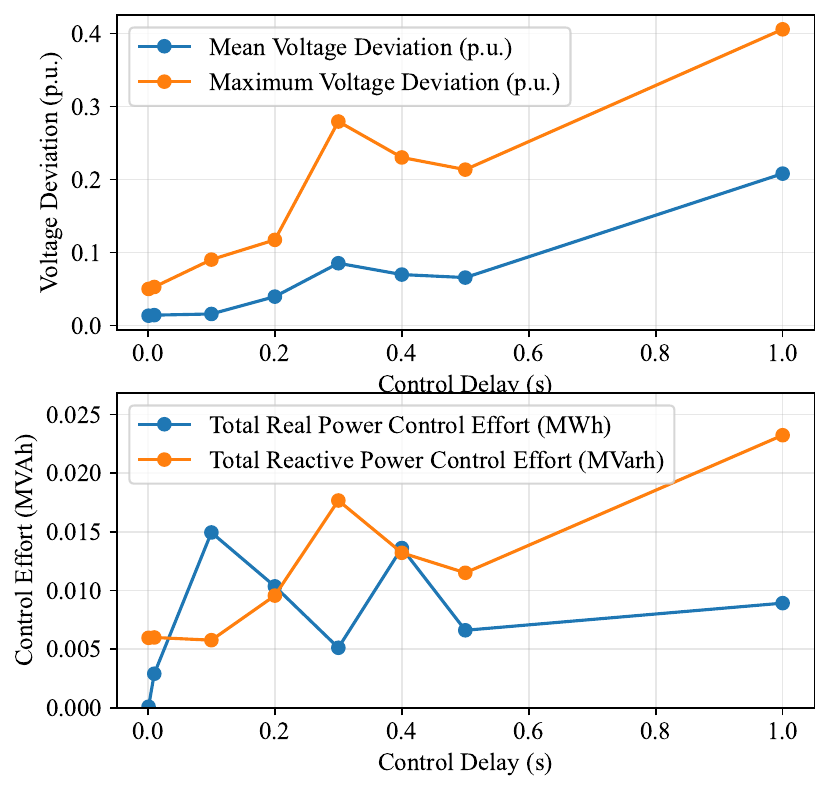}
    \caption{Impact of delay on performance metrics for centralized controller.}
    \label{fig:delay_ablation}
\end{figure}

\subsection{Volt-Var Control}
Finally, we consider the case where real power cannot be controlled. This is relevant when the battery state of charge is important and cannot be changed, and the computing and cooling loads cannot be controlled. The only controllable injections are the reactive power from BESS and UPS (i.e. volt-var control). Table \ref{tab:exp_metrics} shows that for both the centralized and decentralized controllers, the performance degrades when the controllability in real power is removed, although the voltages are still stabilized within the acceptable envelope of IEC 62040-3 curve (Figure \ref{fig:VRT_curves}), so that the data center remains connected to the grid.

\section{Conclusion}
In this work, we propose enhancing the low voltage ride-through capability of data centers by implementing voltage control within their internal distribution networks. We examine the low-voltage ride-through standards relevant to data centers and systematically analyze the controllable resources available in these facilities. Building on this foundation, we model the distribution networks of data centers and compare both centralized and decentralized control solutions.
The experiments confirm that uncontrolled operation leads to tripping due to voltage violations, while both centralized and decentralized controllers can provide low-voltage ride through in the integrated IEEE 14-bus and the data center systems. The centralized approach provides globally optimal control action when delay is negligible, but the performance degrades with actuation delay. It is therefore best suited to settings with low-latency communication infrastructure, which is typically available within a data center.
In contrast, the decentralized method offers a scalable, communication-free alternative. However, the iterative control law requires more steps to converge, since the incremental control gain needs to be sufficiently conservative in order to maintain network stability.

Future work includes evaluating the costs and trade-offs associated with data center services (e.g., power quality and potential impacts on data center operations) and assessing how these costs influence FRT capabilities. In addition, incorporating the dynamic responses of controllable devices and exploring further interactions with other grid services (e.g., under-frequency load shedding and ramp-rate limiting) are also interesting directions for continued research.



\bibliographystyle{ACM-Reference-Format}
\bibliography{main}

\appendix









\end{document}